\newcolumntype{P}[1]{>{\centering\arraybackslash}p{#1}}
\newcolumntype{M}[1]{>{\centering\arraybackslash}m{#1}}
\newtheorem{definition}{Definition}
\newtheorem{lemma}{Lemma}
\newtheorem{remark}{Remark}
\newcommand{\sig}[1]{{{\small\textsf{{#1}}}}}
\newcommand{\txt}[1]{{\small{\texttt{{#1}}}}}
\newcommand{\CommentedText}[1]{}
\definecolor{darkgreen}{rgb}{0,0.5,0}
\definecolor{purple}{rgb}{1,0,1}
\newcommand{\kibitz}[2]{\ifnum\Comments=1\textcolor{#1}{#2}\fi}
\newcommand\defeq{\mathrel{\stackrel{\makebox[0pt]{\mbox{\normalfont\scriptsize def}}}{:=}}}
\newtheorem{example}{Example}
\title{\LARGE \bf
Instance-Level Safety-Aware Fidelity of Synthetic Data and \\
Its Calibration
}
\author{Chih-Hong Cheng$^{1}$, Paul Stöckel$^{2}$ and Xingyu Zhao$^{3*}$ 
\thanks{$^{1}$Department of Computer Science and Engineering, 
Chalmers University of Technology, Sweden. The work was conducted during his service at Univ. Hildesheim. }%
\thanks{$^{2}$Department of Informatics, 
University of Hildesheim, Germany.}%
\thanks{$^{3}$WMG, University of Warwick, CV4 7AL, United Kingdom
       }%
\thanks{$^{*}$Correspondence to:  {\tt\small xingyu.zhao@warwick.ac.uk}\vspace{1mm}}
}
\begin{document}

\maketitle
\thispagestyle{empty}
\pagestyle{empty}

%%%%%%%%%%%%%%%%%%%%%%%%%%%%%%%%%%%%%%%%%%%%%%%%%%%%%%%%%%%%%%%%%%%%%%%%%%%%%%%%
\begin{abstract}

Modeling and calibrating the fidelity of synthetic data is paramount in shaping the future of safe and reliable self-driving technology by offering a cost-effective and scalable alternative to real-world data collection. We focus on its role in safety-critical applications, introducing four types of instance-level fidelity that go beyond mere visual input characteristics. The aim is to ensure that applying testing on synthetic data can reveal real-world safety issues, and the absence of safety-critical issues when testing under synthetic data can provide a strong safety guarantee in real-world behavior. We suggest an optimization method to refine the synthetic data generator, reducing fidelity gaps identified by deep learning components. Experiments show this tuning enhances the correlation between safety-critical errors in synthetic and real data.

\end{abstract}

%%%%%%%%%%%%%%%%%%%%%%%%%%%%%%%%%%%%%%%%%%%%%%%%%%%%%%%%%%%%%%%%%%%%%%%%%%%%%%%%

\section{Introduction}
\label{sec:intro}

The increasing reliance on synthetic data for virtual testing is crucial in solving contemporary engineering challenges. In the automotive sector, scenario-based virtual testing with synthetic data is now the norm for ensuring autonomous vehicle (AV) safety \cite{iso_scenario_2022,bsi_scenario_2023}, replicating real-world scenarios. Significant investments are made in developing advanced simulators~\cite{dosovitskiy2017carla,son2019simulation} and Generative AI (GAI)~\cite{zhong2023guided,hu2023gaia} models to improve aspects such as fidelity, creating realistic virtual environments and sensors that closely resemble real driving conditions. However, achieving perfect fidelity in synthetic data generation remains a challenge. This leads to the question driving our work: ``\emph{What level of fidelity is necessary for synthetic data to be deemed adequate for safety purposes?}''

In this paper, we focus on ``\emph{instance-level fidelity}''\footnote{In virtual testing for AVs, fidelity may also refer to the closeness of synthetic data \textit{distribution} to the real-world data distribution, and the accuracy of the \textit{sequential dynamics} as the AV controller interacts with the virtual environments. While those perspectives are important, this paper focuses on examining fidelity at the level of \textit{an individual synthetic data-point}, e.g., an image or a point-cloud depending on the type of sensors simulated; in this paper, we study synthetic images without loss of generality.}, a term we introduce to describe how accurately an individual synthetic data point, such as an image, replicates the real world's characteristics, features, and variability. We provide a mathematical framework for four different fidelity types, illustrated in Fig.~\ref{specturm_fidelity}. At the left end of this spectrum is ``Input Value fidelity'' (IV-fidelity), which requires pixel-level similarity between synthetic and real-world images. IV-fidelity is \emph{model-agnostic}, aiming for a universally high fidelity level suitable for any System-Under-Testing (SUT).
Despite the stringent criteria of IV-fidelity, it does not guarantee the sameness of output, mainly due to Deep Neural Networks' (DNN) lack of robustness~\cite{goodfellow2014explaining}. 
Therefore, it's more effective to \emph{evaluate fidelity by considering how the specific SUT processes synthetic data and its intended use}, rather than solely focusing on strict fidelity requirements.

Towards this direction, we thus introduce the fidelity from the restricted view using the \emph{safety perspective}, i.e., the ``\emph{Safety-Aware fidelity}'' (SA-fidelity) on the rightmost of Fig.~\ref{specturm_fidelity}. Conceptually, what differentiates a superior synthetic data generator of higher SA-fidelity from a normal one is the \emph{alignment of reproducing safety-critical events}. This implies the capability to generate input with objects (e.g., pedestrians) that can, when interpreting the semantics of the input and reconstructing its scenario in the real world, lead to similar safety-critical concerns. At the same time, the absence of safety-critical issues when testing using synthetic data can provide a strong safety guarantee in real-world behavior. For example, consider a scenario where a pedestrian is extremely far from the autonomous vehicle. Even when the synthetic image generator fails to produce the appropriate contour for the distant pedestrian in a 2D image plane due to the object being extremely small, there exist no safety concerns, as whether or not a DNN-based perception system detects the pedestrian at the moment does not matter. This ultimately leads to a \emph{pragmatic, risk-aware} definition of instance-level fidelity that does not demand \emph{unnecessary perfection}, such as pixel-level precision between synthetic and real images.

With fidelity being an SUT-specific measure, we further consider techniques to adjust the synthetic data generation process to increase the SA-fidelity. Classical synthetic data generators may have parameters such as noise or camera angle, while neural synthetic data generators have learnable parameters. These parameters can be further \emph{calibrated} according to the specific need. The calibration of the synthetic data generator can thus be viewed as an \emph{optimization problem} to reduce the discrepancy in SA-fidelity, where techniques such as genetic programming or gradient descent can be applied depending on the type of synthetic data generation. 

\begin{figure*}[t]
    \centering
    %\vspace{-0.3cm}
\includegraphics[width=0.85\textwidth]{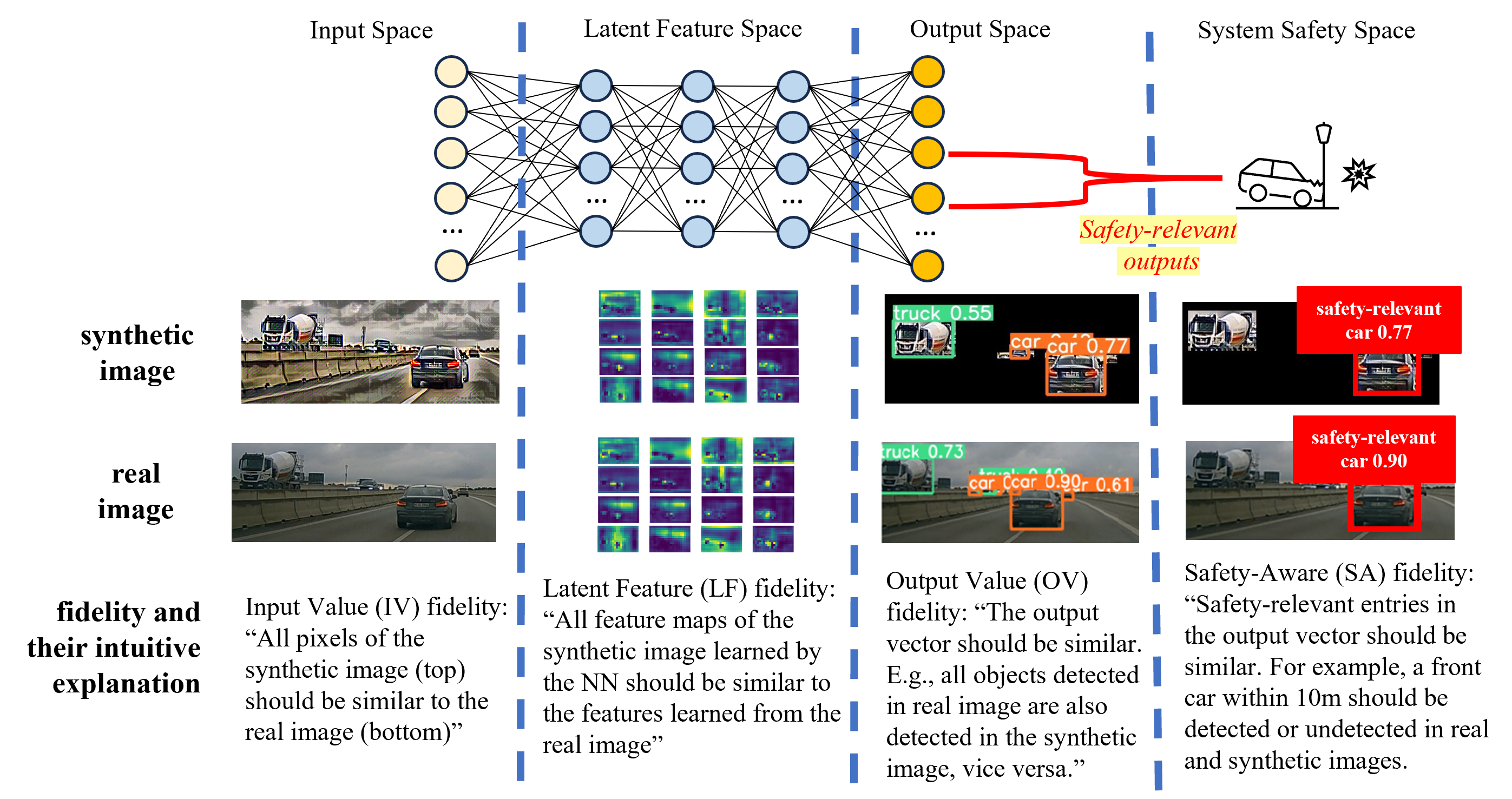}
 \caption{The spectrum of instance-level fidelity from the perspectives of DNN inputs, features extracted, DNN outputs and its effect on system safety.}
 \vspace{-5mm}
  \label{specturm_fidelity}

\end{figure*}

To validate the proposed concept of SA-fidelity and its calibration, we create four synthetic datasets aligned with the read-world vision-based object detection dataset KITTI~\cite{geiger2012we}, and then evaluate the SA-fidelity for each generator. Finally, for a given synthetic data generator, we calibrate some configurable parameters to achieve higher SA-fidelity. Our preliminary evaluation demonstrates that: 1) such a calibration surely leads to better synthetic images in terms of replicating more safety issues of the DNN as if it was deployed in the real world; 2) the optimal configuration for SA-fidelity is indeed not necessarily the same as the configurations for optimizing other types of fidelity. Such encouraging results reconfirm \textit{a practical need for a rigorous and task-oriented definition for the fidelity of synthetic data}.

In summary, the main contribution of this paper includes:

\begin{itemize}
    \item A comprehensive set of concepts on instance-level fidelity of synthetic data, with definitions, metrics, and formal analysis of their relationships.
    \item A first framework formulating the calibration of the synthetic data generation process for safety, with preliminary results showing its effectiveness.
    \item A public repository on all our synthetic datasets, codes, models, and experimental results at \url{https://github.com/semta-group/safe-syn-fidelity}.
\end{itemize}

The paper is structured as follows. After summarizing related work in Sec.~\ref{sec:related.work} and building the preliminaries in Sec.~\ref{sec:foundations}, we introduce three fidelity concepts, their relations, and the extension of SA-fidelity in Sec.~\ref{sec:facets.instance.level.fidelity}. 
We then present an optimization-based approach for post-calibrating the synthetic data generator and the evaluation in Sec.~\ref{sec:calibration} and~\ref{sec:evaluation}. Finally, we conclude with future directions in Sec.~\ref{sec:conclusion}.

\section{Related Work}
\label{sec:related.work}

Simulation-based testing for autonomous systems has been an active field with accessible commercial tools. Testing non-perception modules is relatively mature (see~\cite{zhong2021survey} for a review), where the scenario description can be translated into objects (as bounding boxes) and their physical characteristics (e.g., velocity, coordinate), serving as the input to the motion planner. The fidelity of cyber-physical system simulation (with differential equation included) has been addressed in various frameworks such as Ptolemy II~\cite{lee2015modeling},  Modellica~\cite{mattsson1998physical,lindqvist2023modelica}, and Mathworks Simscape\footnote{\url{https://www.mathworks.com/products/simscape.html}}. 
Our paper primarily focuses on image-based perception; simulation tools such as NVIDIA DriveSim~\cite{DriveSim} can create ``photo-realistic images'' as the underlying pipelines supporting physics-based rendering. These efforts lead to synthetic datasets like SynPeDS~\cite{stauner2022synpeds} tailored for pedestrian detection tasks in urban scenes. With a virtual world modeled, varying scene parameters can reveal performance-limiting factors in perception models~\cite{grau2022variational}. 
Apart from physic-based synthetic data generation, the use of GAI also hints at strong promises~\cite{zhong2023guided,hu2023gaia}. Despite the aforementioned progress being made, fundamental concerns remain regarding the validity of using virtual testing as a strong safety argument (i.e., to replace ``real miles'' with ``virtual miles''~\cite{haq2021can}). This motivates us to introduce a novel perspective to existing fidelity taxonomies (e.g., \cite{9706219}), and mathematically precise the meaning of fidelity by aligning it with its intended use, while SA-fidelity provides a rigorous statement to avoid over-perfecting every detail in simulation like the effort in~\cite{wallace2022validating}. 

Within the ML community, the technique of domain adaptation (see~\cite{wang2018deep} for an extensive survey) allows a DNN to perform well on new input domains via techniques such as training the DNN to encourage domain confusion in the feature space. Practically, one can utilize the concept of domain adaption, leading to training with both real and synthetic data included (e.g.,~\cite{tranheden2021dacs} on mixing Cityscapes (real) and GTA5 (synthetic)), thereby encouraging a DNN have consistent performance in both real and synthetic images. Our result of SA-fidelity complements these results, as \textit{1)} we enable a safety-oriented, task-specific assessment of the synthetic data (i.e., not on the DNN), and \textit{2)} our calibration on the synthetic data generator allows the DNN to remain unmodified, thereby acting on top of any existing domain adaptation scheme. 

 Finally, our motivation for SA-fidelity is based on the observation that the prediction inconsistency of very distant objects between synthetic and real images should not be an issue. While we aim to evaluate synthetic data fidelity, the concept of differentiating object detection performance conditional to the distance to the ego vehicle is also reflected in recent results~\cite{lyssenko2022towards,hung2022let,deng2021revisiting} on designing new evaluation metrics and training methods for 3D detection.

\section{Foundations}
\label{sec:foundations}

Denote $\mathbb{A}$ as the set of \emph{semantic attributes} (e.g., moving and scenery objects) that can be characterized by a \emph{scenario description language}. Without loss of generality (WLOG), we assume that each semantic attribute $a \in \mathbb{A}$ is a parameter that takes on a real number\footnote{Other attribute types can be encoded as real numbers.} as its value. A \emph{scenario description} $sd=\{(a_1,v_1), \dots, (a_k,v_k) \}$ contains a subset of $k$ attributes $\{a_1, \ldots, a_k\} \subset \mathbb{A}$ and associates each attribute with a concrete value $v_i, i\in [1..k]$.

Let $x \in \mathbb{R}^{d_{0}}$  be a data point (image, lidar point-cloud) with $d_{0}$ being its dimension. A synthetic data generator $g_\theta$ takes a scenario description $sd$ and creates a data point $x^s \defeq g_\theta(sd) \in \mathbb{R}^{d_{0}}$; $g_\theta(sd)$ can either be deterministic or stochastic, whereby calling~$g_{\theta}$ stochastic, we refer to the situation that $g_\theta(sd)$ may not generate the same results due to sampling. 
The parameter $\theta$ denotes configurable parameters of the synthetic data generator that will normally not be described by scenario description languages, e.g., the virtual sensor settings in simulators and the guidance scale setting in GAI models. Given a scenario description~$sd$, let $\sig{RW}(sd) \subseteq \mathbb{R}^{d_0}$ be an oracle function that returns an infinite set of \emph{real-world} data points whose semantic attributes match that of~$sd$. A fidelity concern arises because the images generated by the synthetic data generator, given a scenario description $sd$, may not be strictly contained in $\sig{RW}(sd)$.

\vspace{1mm}
\begin{remark}[\textbf{Approximation of $\emph{\sig{RW}}(sd)$}]
\label{remark_rw_approximation}
    Given a scenario description $sd$, $\sig{RW}(sd)$ is infinite and thus unknown in theory, which can only be approximated in practice, e.g., from a collected real-world dataset $X^{cr} \subset \mathbb{R}^{d_0}$. This leads to an approximated set $\widehat{\sig{RW}}(sd)\defeq \{x \mid x \in X^{cr} \cap \sig{RW}(sd)\}$.
\end{remark}
\vspace{1mm}

Let $f: \mathbb{R}^{d_0} \rightarrow \mathbb{R}^{d_L} $ be a perception function that takes an input data point $x \in \mathbb{R}^{d_0}$ and produces an output vector of $d_L$ dimensions. Consequently, $f$ can accept an input data point $x^s$ taken from $g_{\theta}(sd)$ as well as any data point within~$\sig{RW}(sd)$. When $f$ is a DNN, we use $f^l(x)$ to represent the feature vector of~$f$ at the $l$-th layer taking input~$x$.

\section{The Facets of Instance-level Fidelity}
\label{sec:facets.instance.level.fidelity}

This section starts with definitions of fidelity over input, output, and latent features. The definition enables us to pinpoint safety concerns regarding simulation fidelity rigorously. We then precise the concept of SA-fidelity as a relaxation on output value fidelity, where examples are introduced to assist understanding. Note that for simplicity, our definition is only \emph{one-sided} by demanding reproducibility of synthetic inputs (regarding input value, generated features, or output value) in the real world.

\subsection{Input, output, and latent feature fidelity}

\begin{definition}[\textbf{IV-fidelity}] A synthetic data 
$x^s \in g_{\theta}(sd)$ is defined to be $\langle \mathcal{D}_{in}, \epsilon_{in} \rangle$ Input-Value (IV) fidelitous iff
\begin{align}
\label{eq:iv.fidelity}
\exists \tilde{x}\in \sig{RW}(sd):\mathcal{D}_{in}(x^s,\tilde{x})\leq \epsilon_{in} 
\end{align}
where $\mathcal{D}_{in}$ is a distance metric between vectors and~$\epsilon_{in} \in \mathbb{R}_{+}$ is positive value.
\end{definition}

\vspace{1mm}
\begin{definition}[\textbf{OV-fidelity}] Given $f$, a synthetic data point $x^s\in g_{\theta}(sd)$ is defined to be $\langle \mathcal{D}_{out},\epsilon_{out}\rangle$ Output-Value (OV) fidelitous iff
\begin{align}
\label{eq:ov.fidelity}
\exists \tilde{x}\in \sig{RW}(sd): \mathcal{D}_{out}(f(x^s), f(\tilde{x})) \leq \epsilon_{out} 
%\sig{0}
\end{align}
where $\mathcal{D}_{out}$ is a distance metric between vectors and~$\epsilon_{out} \in \mathbb{R}_{+}$ is positive value.
\end{definition}

Intuitively, IV and OV-fidelity are defined based on some distance metrics between the given synthetic data and \textit{any} real-world data that sharing the same $sd$, before and after being processed by the DNN $f$.

\vspace{1mm}
\begin{remark}[\textbf{Conservative bound on distance-based fidelity}]
\label{remark_rw_bound}
As per Remark~\ref{remark_rw_approximation}, we can only implement the approximated $\widehat{\sig{RW}}(sd)$ when measuring IV/OV-fidelity in practice. Using the IV-fidelity as an example, since apparently $\min\{\mathcal{D}_{in}(x^s,\tilde{x})\mid \tilde{x} \in \widehat{\sig{RW}}(sd) \} 
\geq
\min\{\mathcal{D}_{in}(x^s,\tilde{x})\mid \tilde{x} \in \sig{RW}(sd) \}$, the practical approximation is calculating an upper bound on the fidelity discrepancy distance (similar reasoning holds for OV-fidelity). Thus, we may take such conservative approach that errs on the side of caution to prioritize safety.
\end{remark}
\vspace{1mm}

The following result connects IV-fidelity, OV-fidelity, and the common definition of \emph{local robustness} of DNNs \cite{huang2020survey}. 

\vspace{1mm}
\begin{lemma}[\textbf{IV-fidelity, robustness, and OV-fidelity}]\label{lemma:iv.robustness.ov}
Given a synthetic data point $x^s = g_{\theta}(sd)$, assume the perception module $f$ is locally robust with a maximum robustness bound $\epsilon_{r}$, i.e., 
$$\forall x': \mathcal{D}_{in}(x^s, x') \leq  \epsilon_{r} \rightarrow   \mathcal{D}_{out}(f(x^s), f(x')) \leq \epsilon_{out} $$
then if a synthetic data point $x^s$ is  IV-fidelitous under $\langle \mathcal{D}_{in},\epsilon_{in} \rangle$ where $\epsilon_{in} \leq \epsilon_{r}$, 
$x^s$ is also OV-fidelitous under $\langle \mathcal{D}_{out},\epsilon_{out} \rangle$.
\end{lemma}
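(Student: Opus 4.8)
The plan is to prove the lemma by a direct existential-witness argument: I will show that the \emph{same} real-world data point which certifies IV-fidelity of $x^s$ also certifies its OV-fidelity, so that no fresh witness need be constructed. The only machinery involved is the local-robustness hypothesis together with the numerical comparison $\epsilon_{in} \leq \epsilon_{r}$, so the whole proof reduces to one instantiation and one chain of inequalities.

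First I would unpack the IV-fidelity assumption: since $x^s$ is $\langle \mathcal{D}_{in}, \epsilon_{in} \rangle$ IV-fidelitous, the IV-fidelity definition yields a witness $\tilde{x} \in \sig{RW}(sd)$ with $\mathcal{D}_{in}(x^s, \tilde{x}) \leq \epsilon_{in}$. Chaining this with the assumed bound $\epsilon_{in} \leq \epsilon_{r}$ gives $\mathcal{D}_{in}(x^s, \tilde{x}) \leq \epsilon_{r}$, so $\tilde{x}$ sits inside the robustness ball centred at $x^s$. I then instantiate the universally quantified local-robustness hypothesis at $x' = \tilde{x}$; because its premise $\mathcal{D}_{in}(x^s, \tilde{x}) \leq \epsilon_{r}$ is satisfied, its conclusion $\mathcal{D}_{out}(f(x^s), f(\tilde{x})) \leq \epsilon_{out}$ follows. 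Finally, since this very $\tilde{x}$ still belongs to $\sig{RW}(sd)$, it discharges the existential quantifier in the OV-fidelity definition, so $x^s$ is $\langle \mathcal{D}_{out}, \epsilon_{out} \rangle$ OV-fidelitous.

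I do not expect any genuine analytic obstacle; the argument is a short chain of existential instantiation and inequality transitivity that should close in a few lines. The only points that demand care are bookkeeping rather than difficulty. I must ensure the robustness hypothesis is applied to a real-world point $\tilde{x}$, which is legitimate precisely because its quantifier ranges over \emph{all} $x'$ in the input space $\mathbb{R}^{d_0}$ and therefore includes members of $\sig{RW}(sd)$; and I must reuse the identical witness across both definitions, so that the membership $\tilde{x} \in \sig{RW}(sd)$ is inherited from the IV step rather than re-established. A natural follow-up, outside the stated hypothesis, would be to quantify how the guaranteed output bound degrades once $\epsilon_{in} > \epsilon_{r}$, but for the given premises the clean implication is immediate.
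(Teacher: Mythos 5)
Your proof is correct and follows essentially the same route as the paper's: the paper's one-line argument (``applying the transitive rule: the robustness radius $\epsilon_r$ is no less than $\epsilon_{in}$'') is exactly your chain $\mathcal{D}_{in}(x^s,\tilde{x}) \leq \epsilon_{in} \leq \epsilon_r$ followed by instantiating the robustness hypothesis at the IV-fidelity witness. Your write-up merely makes explicit the existential-witness bookkeeping that the paper leaves implicit.
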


\begin{proof} 
The lemma holds simply due to applying the transitive rule: the robustness radius~$\epsilon_r$ (where OV-fidelity guarantees to hold) is no less than~$\epsilon_{in}$. 
\end{proof}

\vspace{1mm}
\begin{remark}[\textbf{Photo-realistic is not real}] 

Lemma~\ref{lemma:iv.robustness.ov}~enables us to build an instance-level interpretation on the jargon ``photo-realistic is not real'', reflecting the common concern on the validity of using photo-realistic simulation in autonomous driving: The phrase ``photo-realistic'' can be interpreted as a synthetic input~$x^s$ being IV-fidelitous, while the phrase ``is not real'' reflects the issue that under the processing of the DNN, the real and the synthetic image may not lead to the sameness of output, i.e., OV-fidelity can not be guaranteed. This is analogous to adversarial examples~\cite{goodfellow2014explaining}, where OV-fidelity does not hold despite IV-fidelity may hold, reflecting the differences between the perturbed and the original image being human-eye indistinguishable. Altogether, one shall rephrase the slang to  ``\emph{photo-realistic is guaranteed to be real only when the difference between pixels is within the local robustness bound}''. 

\end{remark}

Finally, apart from the fidelity manifested in the input or output values, one can also define the fidelity based on the sameness of the learnt features between synthetic and real-world images, as demonstrated below. The sameness of the learnt features hints at the sameness of the decision mechanism, but again, analogously to the proof of Lemma~\ref{lemma:iv.robustness.ov}, one can also formally deduce that LF-fidelity does not immediately imply OV-fidelity. 

\vspace{1mm}
\begin{definition}[\textbf{LF-fidelity}] Given $f$ being a DNN with layers indexed from~$1$ to~$L$, a synthetic data point $x^s\in g_{\theta}(sd)$  is defined to be  $\langle \mathcal{D}_{lf}, \epsilon_{lf}, \mathcal{L}_{lf}\rangle$ Latent-Feature (LF) fidelitous subject to $f$, iff 
\begin{equation}
\label{eq:lf.fidelity}
  \begin{aligned}
\exists \tilde{x}\in \sig{RW}(sd): \; \forall l \in \mathcal{L}_{lf}:    \mathcal{D}_{lf}(f^l(x^s),f^l(\tilde{x}))\leq \epsilon_{lf} 
\end{aligned}
\end{equation}
where $\mathcal{D}_{lf}$ is a distance metric between vectors, $\epsilon_{lf} \in \mathbb{R}_{+}$ is a positive value, and $\mathcal{L}_{lf} \subseteq \{1, \ldots, L-1\}$.
\end{definition}
Notably, Remark \ref{remark_rw_bound} also holds for LF-fidelity.

\subsection{Safety-aware fidelity}

Prior to technical definitions, we first explain the design intuition of SA-fidelity. When applying DNN~$f$ with two data points~$x^s$ and~$x'$, the meaning of (output prediction) SA-fidelity $f(x) \equiv_{s} f(x')$ enables characterizing the behavior that both DNN output predictions do not lead to \emph{inconsistent safety concerns}.

\begin{itemize}
    \item Consider the case where the synthetic image $x^s$ and the real image $x'$ are almost identical, but the limitation of the synthetic image generator is that it can not generate perfect contours for very distant objects (e.g., pedestrians~$200$ meters away in low-speed driving scenarios) that have no safety concerns.  Therefore, despite the fact that $x^s$ may not satisfy IV or OV-fidelity, it shall be acceptable under the definition of SA-fidelity. 

    \item For a synthetic image $x^s$ being SA-fidelitous, it is possible that both $f(x^s)$ and $f(x')$ make similar safety-critical mistakes (e.g., miss a pedestrian in the front), thereby triggering the same type of safety concerns. The fidelity focuses on the \emph{consistency} of safety concerns between synthetic and real data points sharing the \emph{same scenario description}.

\end{itemize}

Recall that scenario description~$sd=\{(a_1,v_1), \dots, (a_k,v_k) \}$ contains a subset of $k$ attributes $a_1, \ldots, a_k \in \mathbb{A}$ and associates each attribute with a concrete value. Let \textbf{prediction influencing attributes} $\sig{pia}(sd)$ be a subset of a scenario description $sd$, where the attributes may be perceived by the perception unit. Let \textbf{safety influencing attributes}~$\sig{sia}(sd)$ return a subset of prediction influencing attributes~$\sig{pia}(sd)$, where an incorrect perception of these attributes can lead to safety concerns. 

\vspace{1mm}
\begin{example}\label{ex:pia.sa.car}
Consider a perception unit performing 3D object detection on cars for automatic emergency braking (AEB), and consider a scenario where there are front cars, one close and one distant to the autonomous vehicle, in rainy situations. The scenario description can be $sd \defeq\{(\sig{frontcar},1), (\sig{farcar}, 1),\dots, (\sig{rain},1) \}$, where the semantic attribute ``\sig{frontcar}'' indicates the existence of front vehicles being close-by (subject to a predefined physical distance),  ``\sig{farcar}'' indicates the existence of distant vehicles, and ``\sig{rain}'' indicates if the scenario is rainy. The $\sig{pia}(sd)$ and $\sig{sia}(sd)$ may act as follows. 

\begin{itemize}
    \item $\sig{pia}(sd) = \{(\sig{frontcar},1), (\sig{farcar}, 1)\}$, as the perception unit only infers knowledge on the existence of vehicles.
    \item $\sig{sia}(sd) = \{(\sig{frontcar},1)\}$, due to distant cars having no safety concerns. 
\end{itemize}
 
\end{example}

\vspace{1mm}
Subsequently, let $x \in \{g_{\theta}(sd)\} \cup \sig{RW}(sd)$, and define $\sig{inter}(x, f, sd)$ be an \textbf{interpretation function} that  ``rewrites'' the DNN output vector~$f(x)$ into a scenario description format, by changing $\sig{pia}(sd)$ and keep the rest the same as that of $sd$, due to DNN unable to recognize semantic attributes within $sd \setminus \sig{pia}(sd)$. 

\vspace{1mm}
\setcounter{example}{0}
\begin{example}[continued]
Consider again the scenario description $sd \defeq \{(\sig{frontcar},1), (\sig{farcar}, 1),\dots, (\sig{rain},1) \}$, where for an input $x \in \sig{RW}(sd)$, if the prediction~$f(x)$ fails to identify any vehicle, then 
$$
    \sig{inter}(x, f, sd) = \{(\sig{frontcar},0), (\sig{farcar}, 0),\dots, (\sig{rain},1) \}
$$
\end{example}

\vspace{1mm}
We can now formally characterize the meaning of two output predictions generating the same safety concerns (Def.~\ref{def:output.safety.similarity}) and the notion of SA-fidelity (Def.~\ref{def:sa.fidelity}). 

\vspace{1mm}
\begin{definition}[\textbf{Output safety similarity}]\label{def:output.safety.similarity}
Given scenario description~$sd$, DNN $f$, and input data points $x$ and $x'$ where $x, x' \in \{g_{\theta}(sd)\} \cup \sig{RW}(sd)$. The output safety similarity $f(x) \equiv_s f(x')$ holds iff the following condition holds.
\begin{equation}
\sig{sia}(\sig{inter}(x, f, sd)) = \sig{sia}(\sig{inter}(x', f, sd))
\end{equation}

\end{definition}

\vspace{1mm}

\setcounter{example}{0}
\begin{example}[continued] For an input $x' \in \sig{RW}(sd)$, if $f(x')$ fails to identify any vehicle, but for the synthetic data point $x^s = g_{\theta}(sd)$, $f(x^s)$ only fails to identify any far-distant vehicle. Then $f(x^s) \not\equiv_s f(x')$ due to the following  reasoning:
\begin{align}
\sig{inter}(x', f, sd) &= \{(\sig{frontcar},0), (\sig{farcar}, 0),\dots, (\sig{rain},1) \}& \nonumber \\
\sig{inter}(x^s, f, sd) &= \{(\sig{frontcar},1), (\sig{farcar}, 0),\dots, (\sig{rain},1) \}& \nonumber \\
    \sig{sia}(\sig{inter}(x', f, sd)) &= \{(\sig{frontcar},0)\} \nonumber \\
    &\neq \sig{sia}(\sig{inter}(x^s, f, sd))  = \{(\sig{frontcar},1)\}  \nonumber
\end{align}

\end{example}

\vspace{1mm}
\begin{definition}[\textbf{SA-fidelity}]\label{def:sa.fidelity} Given $f$, a synthetic data point $x^s\in g_{\theta}(sd)$ is defined to be Safety-Aware (SA) fidelitous under $\emph{\sig{inter}}(\cdot)$ and $\emph{\sig{sia}}(\cdot)$ iff
\begin{align}
\label{eq:sa.fidelity}
\exists \tilde{x}\in \sig{RW}(sd): f(x^s) \equiv_s f(\tilde{x})
\end{align}
\end{definition}
\vspace{2mm}

Note, aforementioned remarks on approximating $\sig{RW}(sd)$ and yielding conservative measurements holds for SA-fidelity.

\vspace{1mm}

\section{Calibrating Data Generation for Safety-aware Fidelity}\label{sec:calibration}

In the following, we present an optimization-based approach for fine-tuning the data generation process towards increasing safety-aware fidelity; the developed algorithm is equally applicable to all other types of fidelity criteria due to the criteria being mathematically defined.  

\vspace{1mm}

\begin{definition}[\textbf{Calibrator function}]
    Let the synthetic data generator  $g_{\theta} \defeq \mathcal{C}_{\theta} \circ \mathcal{G}$ be a composition of two functions~$\mathcal{C}_{\theta}$ and~$\mathcal{G}$, with the following characteristics: Given a scenario description $sd$, $g_{\theta}(sd) = \mathcal{C}_{\theta}(\mathcal{G}(sd))$, where
\begin{itemize}
    \item $\mathcal{G}(sd) \in \mathbb{R}^{d_0}$ is a synthetic data generator that can generate a legitimate input for the DNN~$f$. 
    \item $\mathcal{C}_{\theta}: \mathbb{R}^{d_0} \rightarrow \mathbb{R}^{d_0}$ is a \textbf{calibrator function} with $\theta$ being learnable parameters; it performs further transformation of an input data point while maintaining the dimension. 
\end{itemize}
\end{definition}

The formulation $g_{\theta} \defeq \mathcal{C}_{\theta} \circ \mathcal{G}$  reflects the practical need where $\mathcal{G}$ can be a commercial synthetic data generator whose underlying mechanism is unknown, acting like a black box. Given a scenario description~$sd$, a data point directly generated from $\mathcal{G}(sd)$ may have SA-fidelity issues. We aim to find the best $\mathcal{C}_{\theta}$, via optimizing the set of parameters~$\theta$ to reduce the inconsistency of safety concerns. This leads to the following optimization problem as formulated in Def.~\ref{def:calibration.sa.fidelity}, enabling the use of any optimization algorithm ranging from grid-based search to genetic programming.

\vspace{1mm}
\begin{definition}[\textbf{Calibration for SA-fidelity}]\label{def:calibration.sa.fidelity} 
Let~$\mathcal{D} \defeq \{(x, \mathcal{G}(sd))\}$ be the calibration set, containing pairs of real and synthetic data points, where for each pair $(x, \mathcal{G}(sd))$, the real data point $x$ has the scenario description being~$sd$, i.e., $x \in \emph{\sig{RW}}(sd)$. Eq.~\eqref{eq:calibration} formulates the design of the calibrator function as a minimization problem
{\small
\begin{equation}\label{eq:calibration}
    \sig{argmin}_{\theta} \!\!\! \!\!\! \!\!\! \!\!\! \sum_{(x, \mathcal{G}(sd)) \in \mathcal{D}} \!\!\! \!\!\!\!\!\!\!\!\!\!\!\sig{loss}( \sig{sia}(\sig{inter}(x, f, sd)) , \sig{sia}(\sig{inter}(\mathcal{C}_{\theta}(\mathcal{G}(sd)), f, sd)) )
\end{equation}}
where $\sig{loss}(x, y)$ is a function computing the difference between~$x$ and~$y$.

\end{definition}

\vspace{1mm}

In Eq.~\eqref{eq:calibration}, given a synthetic data point~$\mathcal{G}(sd)$, $\mathcal{C}_{\theta}(\mathcal{G}(sd))$ applies the transformation to create a modified data point. One can then compare the real data point~$x$ and the modified synthetic data point~$\mathcal{C}_{\theta}(\mathcal{G}(sd))$ in terms of output safety similarity (cf Def.~\ref{def:output.safety.similarity}). Finally, the design of function \( \sig{loss}(\cdot, \cdot) \) used in Eq.~\eqref{eq:calibration} is subject to user intention. One option is to use \( \sig{neq}(\cdot, \cdot) \) formally defined in Eq.~\eqref{eq:neq}: It evaluates instance-level SA-fidelity in a \emph{qualitative/Boolean} manner, by setting the value to~$1$ if output safety similarity is violated, and~$0$ if output safety similarity holds.  Another option is to use \emph{quantitative} $L_1$ loss following the definition of Eq.~\eqref{eq:L1.norm.semantic.attribute}, summing up the absolute difference of semantic attributes. Other options, such as $L_2$ or $L_{\infty}$ loss, can also be used.  
\begin{equation}\label{eq:neq}
            \sig{neq}(x, y) \defeq 
        \begin{cases} 
        1 & \text{if } x \neq y, \\
        0 & \text{otherwise.}
        \end{cases}
        \end{equation}
        
\begin{align}\label{eq:L1.norm.semantic.attribute}
\sig{diff}_{L1}(\{(a_1, v_1) ,\ldots, (a_k, v_k)\}, \{(a_1, v'_1) ,\ldots, (a_k, v'_k)\}) \nonumber\\ \defeq \sum^k_{i=1}|v_i - v'_i|
\end{align}

\section{Evaluation}
\label{sec:evaluation}

\subsection{Research questions}

We investigate three research questions (RQs)---RQ1 and RQ2 are designed to \textit{conceptually} validate the proposed SA-fidelity and its calibration, while RQ3 aims to understand its use in the \textit{practical} context of scenario-based AV testing.

\vspace{1mm}
\paragraph{\textbf{RQ1 (assessment)}: Can the proposed SA-fidelity metric effectively differentiate and rank different synthetic data generators in terms of safety?} Given a set of simulators/GAI-models, one normally wants to select the best solution for synthetic data generation, considering the computational budget in practice. Thus, it is important to examine whether the proposed SA-fidelity metric is capable of evaluating and comparing these generators.

\vspace{1mm}
\paragraph{\textbf{RQ2 (calibration)}: For a given synthetic data generator, can SA-fidelity be utilized for calibrating optimal configurations geared toward safety?} This RQ explores the potential of SA-fidelity metrics not only as assessment tools but also as crucial guides in the fine-tuning process of synthetic data generators. 

\vspace{1mm}
\paragraph{\textbf{RQ3 (integration with scenario-based testing)}: When integrating SA-fidelity concepts into the engineering process of scenario-based virtual testing, what are additional challenges to be considered?} This question comes from additional concerns due to the implementation steps for RQ1 and RQ2 do not fully align with the established engineering process of scenario-based virtual testing.

\subsection{Experiment setup}

\begin{figure*}[t]
    \centering
    %\vspace{-0.3cm}
    \includegraphics[width=\textwidth]{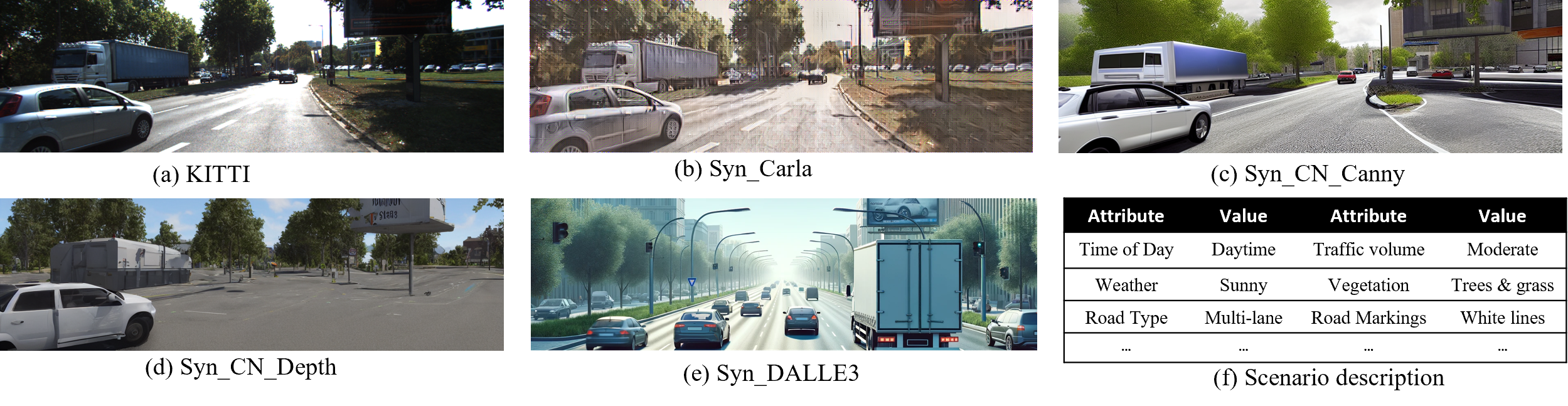}
 \caption{Example of a real-world KITTI image (a), and its synthetic images generated via  ``style transfer'' (b-d) and ``scenario understanding'' (e).}
  \label{fig_datasets_example}
  \vspace{-5mm}
\end{figure*}

\subsubsection{Synthetic data generation}

We generated four synthetic datasets based on the real-world AV dataset KITTI~\cite{geiger2012we}. Instead of utilizing all data in KITTI, in this initial study, we manually filtered out repetitive images (with similar semantic attributes) and randomly selected 90 real images from KITTI. Subsequently, we generated 90 corresponding images for each synthetic dataset, with examples shown in Fig.~\ref{fig_datasets_example}. The four datasets are generated in two different ways:
\begin{itemize}
    \item (``Style transfer'') Directly conduct style transfer from real to synthetic images using Neural Style Transfer\footnote{We used the solution from Phosus (\url{https://phosus.com/}) to generate images directly without our own programming efforts; neural style transfer is also supported by Keras (\url{https://keras.io/examples/generative/neural_style_transfer/}).} and ControlNet~\cite{zhang2023adding}. To be exact, \txt{Syn\_Carla} is the dataset generated by Neural Style Transfer taking the Carla simulated images as input; while \txt{Syn\_CN\_Canny} and \txt{Syn\_CN\_Depth} are two datasets by applying ControlNet with canny edge detection and depth map on the real KITTI images, respectively. The style transfer approach is ideally most suitable for rendering images from gamified simulators' 3D models, where in our setup, we particularly use the opposite direction to ensure the semantic sameness of the real-synthetic image pair. The generated datasets are used for accessing~RQ1 and ~RQ2.

    \item (``Scenario understanding'') Towards accessing RQ3, we additionally build a pipeline to query ChatGPT\footnote{In our evaluation, we used ChatGPT version~4.
    } to generate scenario descriptions (first as a paragraph, followed by summarizing into the form like Fig.~\ref{fig_datasets_example}(f)) on images from KITTI; then use DALLE-3~\footnote{DALLE-3: \url{https://openai.com/dall-e-3}} to synthesize new images according to the scenario description. Then manually label the classes and bounding boxes for objects in generated images. This results in the \txt{Syn\_DALLE3} dataset. 
\end{itemize}

\subsubsection{DNNs under testing}
Without loss of generality, we evaluate the methodology on three pre-trained object detection models available in PyTorch, namely \txt{ssd300\_vgg16} (the SSD300 model in~\cite{liu2016ssd}), \txt{fasterrcnn\_resnet50} (the Faster R-CNN model~\cite{li2111benchmarking} with a ResNet-50-FPN backbone) and \txt{fcos\_resnet50} (the FCOS model~\cite{tian2019fcos} with a ResNet-50-FPN backbone).

\subsection{Results}

\subsubsection{RQ1}
To measure SA-fidelity, we must first determine safety-relevant objects in a given real image. This is an application-specific task requiring case-by-case safety analysis for the SUT. For simplicity, we calculate the area size of detected objects and set a threshold in our experiments. We then determine an object to be safety-irrelevant if its size is smaller than this threshold, assuming it is too distant to pose any safety concerns. Certainly, this approach is an oversimplified substitute for a comprehensive safety analysis, but we believe it suffices to demonstrate and validate the \textit{conceptual} idea of SA-fidelity. E.g., in the real image of Fig.~\ref{fig_inconsistancy_number}(a), those cars in orange bounding boxes are too far away and thus labeled as ``don't care'', while only six near cars in green bounding boxes are labeled as ``car'' and need to be correctly recognized for safety.

\begin{figure}[h!]
    \centering
     \includegraphics[width=0.9\columnwidth]{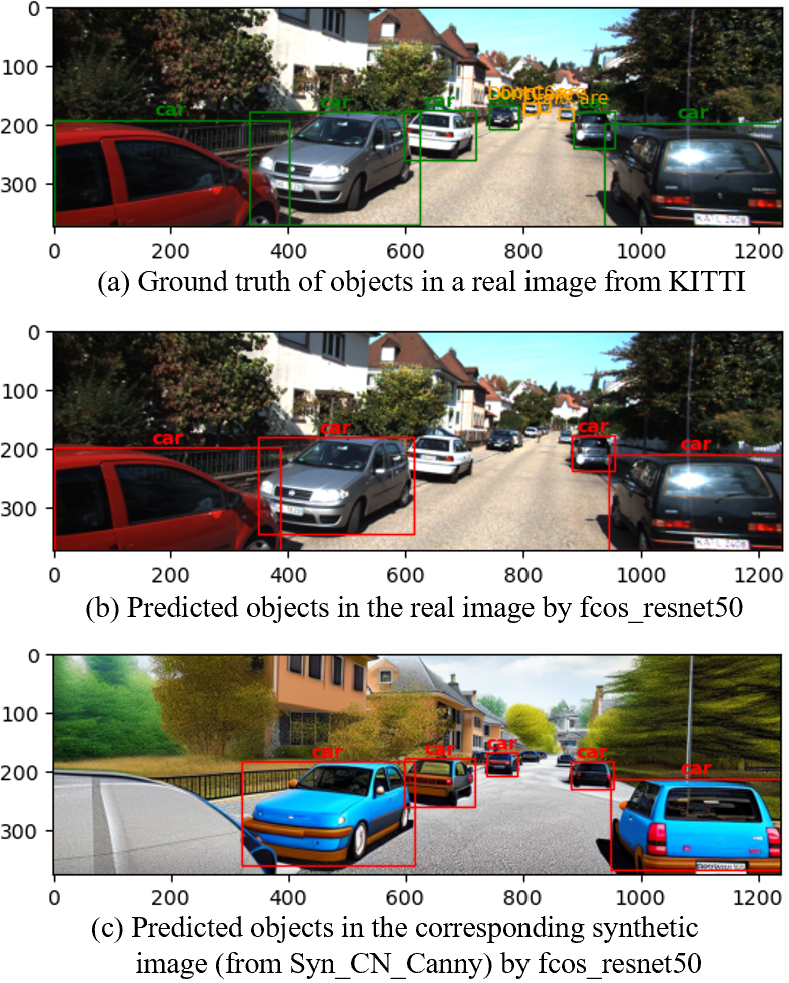}
 \caption{Examples of inconsistent object detection on safety-relevant objects in real vs. synthetic images.}
 \vspace{-5mm}
  \label{fig_inconsistancy_number}
\end{figure}

In Fig.~\ref{fig_inconsistancy_number}(b), the DNN detects only 4 out of the 6 safety-relevant cars, missing the two distant cars on the left. If used in the real world, this oversight would pose safety concerns and the virtual testing should flag such issues (i.e., the two cars should also be missed in virtual testing). However, in the synthetic image of Fig.~\ref{fig_inconsistancy_number}(c), these two cars are detected, resulting in two ``false negatives''. While the DNN correctly detects the left-most safety-relevant car in the real image, it misses it in the synthetic image, leading to a ``false positive'' in terms of using virtual testing to identify real-world safety issues. In other words, ``false negatives'' refer to the number of safety-relevant objects that are not detected in the real image but are detected in the synthetic images; Conversely, ``false positives'' represent the number of safety-relevant objects detected in real images but not in synthetic images. The sum of them is the total number of ``inconsistent predictions'' that we use for measuring SA-fidelity.

\begin{figure}[h!]
    \centering
\includegraphics[width=\columnwidth]{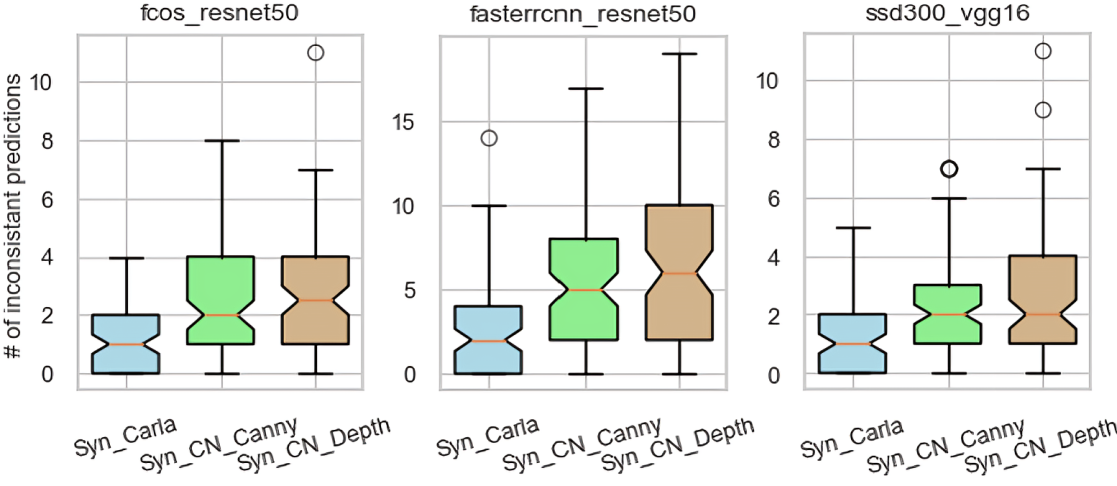}
 \caption{Statistics of inconsistent predictions, over 3 DNNs and 3 synthetic datasets.}
  \label{fig_rq1_boxplot}
  \vspace{-5mm}
\end{figure}

In Fig.~\ref{fig_rq1_boxplot}, we present the statistical results of inconsistent predictions, across~3 DNNs (one per column) and~3 synthetic datasets (indicated by boxes of~3 colors). In general, \txt{Syn\_Carla} is observed to have the highest SA-fidelity, while \txt{Syn\_CN\_Depth} appears to be the worst. Given such data, statistical hypothesis testing (e.g., T-test or U-test, depending on if the data distribution meets certain assumptions) can be employed to determine if one synthetic dataset is indeed significantly better than another in a principled way. We omit it for simplicity.

\subsubsection{RQ2} 
The first step is to identify configurable parameters $\theta$ in $\mathcal{C}_{\theta}$. For image-based inputs, apart from implementing $\mathcal{C}_{\theta}$ also as a DNN, one can also consider simpler implementations such as applying filters from the Python Imaging Library\footnote{ \url{https://pillow.readthedocs.io/en/stable/}}. In this case, $\theta$ can be parameters such as the degree of contrast, sharpness, brightness, or Gaussian blur to be additionally applied once when an image is generated. In RQ2, we adopt such a simple yet sufficient solution to conceptually validate the idea of calibrating $g_\theta$. Specifically, we conduct a grid search in the parametric space of 3 parameters representing contrast, sharpness, and brightness of filters. Each parameter ranges from~$0.8$ to~$1.2$ with a step size of~$0.1$.

The results of RQ2 are presented in Table~\ref{tab_calibration_rq2}. Despite we only searched for the optimal configuration in a very limited parametric space, we may already observe that: 
(1) The $\mathcal{C}_{\theta}$ indeed can be configured, while the best and worst configurations exhibit significant differences for SA-fidelity; 
(2) The optimal configuration of a given synthetic data generator varies across different DNNs;
(3) The best configuration for SA-fidelity is not necessarily the same as for OV-fidelity, nor is the worst configuration.

\begin{table*}[h!]
%\vspace{0.5cm}
\caption{Best and worst configurations for SA-fidelity and OV-fidelity, given a DNN and synthetic data generator. NB, for each entry, the 3 parameters in bracket represent ``(contrast, brightness, sharpness)'' and the number after colon is the total number of inconsistent predictions over the calibration dataset.}
\label{tab_calibration_rq2}
\centering
\resizebox{0.83\textwidth}{!}{%
\begin{tblr}{
  cells = {c},
  cell{1}{3} = {c=2}{},
  cell{1}{5} = {c=2}{},
  cell{3}{1} = {r=3}{},
  cell{6}{1} = {r=3}{},
  cell{9}{1} = {r=3}{},
  vline{3} = {-}{},
  hline{3,12} = {-}{0.08em},
  hline{6,9} = {-}{dashed},
}
\hline
                                         &                & \uline{SA-Fidelity} &             & \uline{OV-Fidelity} &             \\
DNN                                      & dataset       & best conf.          & worst conf. & best conf.          & worst conf. \\
\begin{sideways}fcos\end{sideways}       & \texttt{Syn\_Carla}     & (1.2,1.2,0.8):105    & (0.8,0.9,1.1):123  &    (1.2,1.2,0.8):112    &  (0.9,0.9,1.1):131             \\
                                         & \texttt{Syn\_CN\_Canny} &  (0.8,0.8,0.8):216  & (0.9,1.2,0.9):231& (0.8,0.8,0.8):230    &  (1.0,1.2,0.8):247  \\
                                         & \texttt{Syn\_CN\_Depth} &   (0.8,0.8,0.8):230    &  (1.0,1.1,1.1):240 & (0.8,0.9,0.8):239 &(1.2,1.2,1.2):251 \\
\begin{sideways}fasterrcnn\end{sideways} & \texttt{Syn\_Carla}    & (0.9,1.2,0.8):215 & (1.0,0.8,1.1):244 &  (0.9,1.2,0.9):462   & (1.0,0.8,1.1):516 \\
                                         & \texttt{Syn\_CN\_Canny} & (0.8,0.8,0.8):475 &(1.1,1.2,0.9):512     & (0.8,0.8,1.2):873     & (1.2,1.2,1.0):960\\
                                         & \texttt{Syn\_CN\_Depth} & (1.1,0.8,0.8):592  & (0.9,0.9,0.9):610 & (0.8,0.8,0.9):958  & (1.2,1.2,1.1):992 \\
\begin{sideways}ssd300\end{sideways}     & \texttt{Syn\_Carla}     & (0.8,0.9,1.0):105  & (1.2,1.2,1.2):130 & (0.9,1.2,0.8):117 & (1.2,0.8,1.2):141 \\
                                         & \texttt{Syn\_CN\_Canny} &  (0.9,0.8,0.8):107  &  (1.0,1.2,1.2):117  &  (0.9,0.9,0.8):110  &  (1.0,1.2,1.2):120\\
                                         & \texttt{Syn\_CN\_Depth} &  (1.2,0.8,0.8):240 &  (1.2,1.1,0.9):256  &(1.2,0.8,0.8):252        &  (1.2,1.1,0.9):267
\end{tblr}
}
\end{table*}

\subsubsection{RQ3} In previous examples where style transfer is applied, the scenario description associated with an image is actually left implicit; we simply assume the described scenario to be the ground truth label translated into the detector output format (first split input image into grids of cells, and for each cell, consider the existence of a large object as an individual safety influencing attribute). 

In our last experiment, however, as we used DALLE-3 to generate ``semantically equivalent'' scenarios, we found it difficult to precisely control the number of vehicles and the location of the vehicles in a DALLE-3 synthesized image. This makes the implicit ``cell-based'' safety influencing attributes unusable, as for a synthetic-real image pair, the objectness in a particular cell may not be consistent, implying that even when the DNN makes no mistake, the ``cell-based'' safety-aware consistency by default does not hold. While one can relax by building a 0-1 attribute characterizing the existence of nearby objects, the relaxation also makes the optimization problem less useful due to abstracting the information too much. Ultimately, we used the \emph{consistency on false negative rate for close-by objects} as the optimization objective, which also resembles a \emph{generalized, probabilistic view} on the manifesting safety-critical concerns between a pair of synthetic and real images. Altogether, our assessment of the GAI-based synthetic data generator leads to interesting directions to be further explored.

\vspace{1mm}
\begin{remark}[\textbf{Generalisation to unseen scenarios}]
\label{remark_generlisation}
All~3 RQs are answered based on a ``calibration dataset'' containing paired real-world and synthetic data. In this dataset, each simulated scenario is either implicitly represented by the corresponding real-world data (for RQ1 and RQ2) or explicitly extracted from the real-world data through ``scenario understanding'' (for RQ3). A crucial underlying assumption, for practicality, is that both the assessment and optimal configuration of the synthetic data generator, derived from this calibration dataset, can be generalized to unseen or new scenarios (without collected real-word data) that will be simulated in the virtual testing. 
\end{remark}

While we plan to conduct rigorous experiments by splitting calibration and testing datasets in future works to validate such generalisablity, this first work mainly focus on the mathematical foundation and theoretical framework.

\section{Conclusion}
\label{sec:conclusion}

Our research underscores the critical importance of understanding and calibrating synthetic data to advance the safety and reliability of self-driving technology. By providing a cost-effective and scalable substitute to traditional real-world data collection methods, we have identified and addressed four distinct types of instance-level fidelity, extending beyond simple visual attributes. Our proposed SA-fidelity reflects a profound thinking on ``focusing on what really matters''. This approach is vital for aligning synthetic data with real-world safety challenges. We introduced an innovative optimization method to refine the synthetic data generator, effectively narrowing the fidelity gaps detected by the DNN-based component. Our findings demonstrate that this post-calibration improves the correlation between safety-critical errors in synthetic and real-world scenarios, marking a step forward in developing safer autonomous driving systems.

SA-fidelity reflects the practical need for synthetic and real image pairs to lead to similar safety concerns in a cost-effective manner. While such a concept is clear with motivations due to the irrelevance of, e.g., far objects, it is crucial to admit that we have not made explicit the precise definition of ``safety concerns", as our formulation is largely on perception systems. The use of output safety similarity as proposed in Def.~\ref{def:output.safety.similarity} suggests that a subset of semantic attributes in a scenario description is safety-critical, and it also assumes that \emph{the sameness of safety-relevant semantic attributes for two scenario descriptions implies that the autonomous system will make the same decision, thereby inducing the same safety concern when it exists}. Therefore, the importance of a well-designed scenario description language can not be underestimated. 

This work leads to many avenues to be explored. The first direction is to move beyond instance-level comparisons and also to extend the SA-fidelity concept to understand the closeness of synthetic data \textit{distribution} to the real-world scenario distribution, as well as accounting \textit{sequential dynamics} as the AV controller interacts with the virtual environments. Another direction is integrating the technique to enable the creation of an independent virtual testing and assessment platform on DNN-based systems.

%
% ---- Bibliography ----
%
% BibTeX users should specify bibliography style 'splncs04'.
% References will then be sorted and formatted in the correct style.
%

%\vspace{-5mm}

% \noindent\paragraph*{Acknowledgements}
% This project has received funding from the European Union’s Horizon 2020 research and innovation programme under grant agreement No 956123.  It is also financially supported by the U.K. EPSRC through End-to-End Conceptual Guarding of Neural Architectures [EP/T026995/1].

%%%%%%%%%%%%%%%%%%%%%%%%%%%%%%%%%%%%%%%%%%%%%%%%%%%%%%%%%%%%%%%%%%%%%%%%%%%%%%%%
 \addtolength{\textheight}{-2.1mm}   % This command serves to balance the column lengths
                                  % on the last page of the document manually. It shortens
                                  % the textheight of the last page by a suitable amount.
                                  % This command does not take effect until the next page
                                  % so it should come on the page before the last. Make
                                  % sure that you do not shorten the textheight too much.

%%%%%%%%%%%%%%%%%%%%%%%%%%%%%%%%%%%%%%%%%%%%%%%%%%%%%%%%%%%%%%%%%%%%%%%%%%%%%%%%

% \section*{APPENDIX}
% \label{appendix}

% \subsection{Control loop structure based on DeepSTPA}

% \begin{figure*}
%     \centering
%     \includegraphics[width=1\linewidth]{STPA-like-adv.png}
%     \caption{Control loop structure based on DeepSTPA}
%     \label{fig_deepstpa_controlstructure}
% \end{figure*}

% \section*{ACKNOWLEDGMENT}

% The preferred spelling of the word ÒacknowledgmentÓ in America is without an ÒeÓ after the ÒgÓ. Avoid the stilted expression, ÒOne of us (R. B. G.) thanks . . .Ó  Instead, try ÒR. B. G. thanksÓ. Put sponsor acknowledgments in the unnumbered footnote on the first page.

%%%%%%%%%%%%%%%%%%%%%%%%%%%%%%%%%%%%%%%%%%%%%%%%%%%%%%%%%%%%%%%%%%%%%%%%%%%%%%%%

% References are important to the reader; therefore, each citation must be complete and correct. If at all possible, references should be commonly available publications.
% \addtolength{\textheight}{-12cm}

% Generated by IEEEtran.bst, version: 1.14 (2015/08/26)

% \end{thebibliography}

\end{document}